\newenvironment{brsm}{
   \begin{smallmatrix} }{%
  \end{smallmatrix} }
\newcommand{\id}{\mathrm{id}}
\newcommand{\comment}[1]{}
\newcommand{\CNOT}{\rm CNOT}
\newcommand{\SWAP}{\rm SWAP}
\newtheorem{theorem}{Theorem}
\newtheorem{remark}[theorem]{Remark}
\newtheorem{corollary}[theorem]{Corollary}
\newtheorem{proposition}[theorem]{Proposition}
\DeclareMathOperator{\im}{\mathring \imath}
\begin{document}

\preprint{APS/123-QED}

\title{Transpiling Quantum Circuits using the Pentagon Equation}

\author{Christos Aravanis}
\email{c.aravanis@sheffield.ac.uk}
\affiliation{
 The University of Sheffield International College, 3 Solly Street, Sheffield, S1 4DE, U.K.}

\author{Georgios Korpas}
\email{georgios.korpas@hsbc.com} 
\affiliation{%
HSBC Lab, 
Digital Partnerships \& Innovation, 8 Canada Square, London, E14 5HQ, U.K.}


\author{Jakub Marecek}
\email{jakub.marecek@fel.cvut.cz}
\affiliation{%
Department of Computer Science, 
Czech Technical University in Prague, Karlovo nam. 13, Prague 2, Czech Republic}

\date{\today}

\begin{abstract}
We consider the application of the pentagon equation in the context of quantum circuit compression. We show that if solutions to the pentagon equation are found, one can transpile a circuit involving  
non-Heisenberg-type interactions
to a circuit involving only
Heisenberg-type interactions
while, in parallel, reducing the depth of a circuit. In this context, we consider a model of non-local two-qubit operations of Zhang \emph{et. al.} (the $A$ gate), and show that for certain parameters it is a solution of the pentagon equation.
\end{abstract}

\maketitle

\section{Introduction}
Quantum technologies and quantum computing in particular are advancing in an unprecedented pace. As a result, the capabilities of quantum computers are constantly increasing wherein the first sings of ``quantum supremacy" or ``quantum advantage" \cite{arute2019quantum,zhong2020quantum} hint that it might not take as long as originally thought to adopt them in a large scale. However, the current generation of quantum computers, often termed the Noisy Intermediate-Scale Quantum (NISQ) computers \cite{Preskill2018quantumcomputingin}, suffer from various limitations which make them quite hard to exploit for useful applications across various domains such as finance, logistics, chemistry or material science.
A key limitation is known as the (effective) depth of a quantum circuit. 
In the quantum circuit model, one applies a sequence of unitary transformations $U$ on quantum states \cite{nielsen_chuang_2010}. The number of subsequent applications of unitary transformations on (up to) two qubits at a time, is known as the depth of the circuit. 
The depth of the circuit is limited by the ratio of coherence time and gate time,  
as well as the fidelity of the two-qubit gates. 
The limit on the depth of a quantum circuit limits what algorithms can be implemented, and the quality of the output, despite the flourishing of various error mitigation techniques. While Fault Tolerant Error Corrected (FTEC) quantum computers will be more resilient to these limitations, one may envision that even there, it will be preferable to reduce the number of gates. 
Circuit transpiling techniques, which can reduce circuit depth, will be essential in both the NISQ and FTEC eras of quantum computing.
The transpiling of quantum circuits starting from a potentially large gate set to a particular hardware-specific ``native'' gate set, can be quite challenging, esp. in architectures with limited connectivity \cite{lin2014paqcs}. Limited connectivity is another key limitation of the current quantum computers and several challenges must be overcome to fully exploit these powerful machines. 
Notice that, due to the limited connectivity, two-qubit gates often cannot be readily applied and the states of the corresponding qubits must be swapped with those of their neighbors until the states reside on qubits where a two qubit gate is supported. SWAP gates are, however, often expensive. For instance in CNOT-based native gate sets, they are often implemented using three CNOT gates. As a result, reducing the number of SWAP gates is desirable, but computing the minimum number of SWAP gates required in a given circuit is an NP-Hard problem \cite{botea2018complexity}.
Most techniques for transpiling quantum circuits are multi-pass heuristics, rather than exact \cite[cf.]{grosse2009exact,nannicini2021optimal,madden2022best,madden2022first}.
First, an initial set of transformations is used to translate the quantum circuit  to a ``native'' gate set.
Second, a heuristic mapping of logical qubits to physical qubits is suggested.  
Third, standard ``circuit compression'' transformations \cite{1012662,1219016} are applied in a Knuth-Bendix fashion. 
Ref. 
\cite{1012662} introduced the first five templates and many more followed \cite[e.g.]{shende2005synthesis,maslov2004reversible,1219016,maslov2008quantum,Gulania2021}.
We refer to \cite{saeedi2013synthesis,kusyk2021survey} for more in-depth surveys
and to \cite{zulehner2018efficient,tan2020optimality} for recent experimental comparisons.

 

\begin{figure}[!htb]
    \centering
    \includegraphics[scale=0.166]{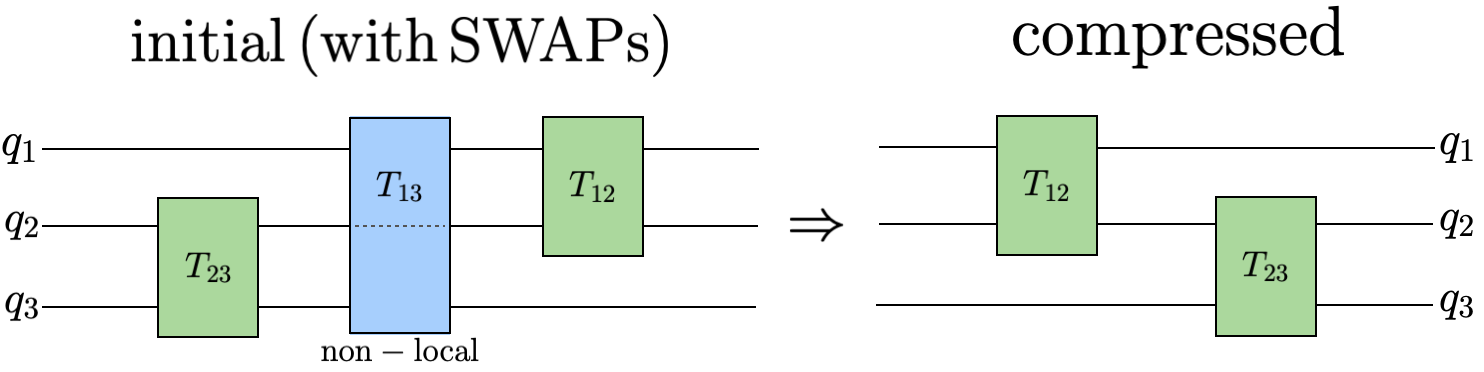}
    \caption{The schematic of a circuit compression for gates $T$ that satisfy the pentagon equation. In the LHS there is a non-local 2-qubit gate that involves SWAP gates.}
    \label{fig:fancy}
\end{figure}

Our work is inspired by Ref. \cite{Gulania2021}. There, the authors approached the problem of quantum circuit compression, in the framework of the Hamiltonian evolution of the 1D Heisenberg model, using the Yang-Baxter Equation (YBE) \cite{Bethe1931,Baxter1985,FADDEEV1995,JIMBO1989,Chari1995ov}. (See \cite{isaev2022lectures} for a modern survey of YBE.) Specifically, they map the YBE to parametrized quantum circuits composed of 2-qubit gates and find the necessary conditions on these parameters such that the gates permuted in a certain way can be compressed by adding the parameter values of two sequential gates on the same two qubits. In this context, solutions of the YBE have the potential to significantly reduce the depth of a local circuit.


In this note, we elaborate upon the approach of Ref. \cite{Gulania2021} by compressing the depth of quantum circuits using lesser known tools from integrability theory, specifically the so-called ``pentagon equation". However, our method yields a surprise on the potential applications on quantum circuits, a form of duality. 

First, different to Ref. \cite{Gulania2021}, we show the necessary conditions required to compress the depth of quantum circuits that contain 2-qubit non-local interactions, that is non-nearest-neighbor long-range interactions, that usually are implemented by running first SWAP gates (see Figs. \ref{fig:fancy} and \ref{fig:nonlocal}). This can achieve circuit compression under certain conditions. The schematic of the compression is shown in Fig. \ref{fig:fancy}.
In a more general context, our method obtains a dual picture where circuits with non-local interactions can be expressed as circuits with local ones as long or vice-versa.  Our approach depends on these 2-qubit gates that implement this ``duality" to be solutions of the so-called pentagon equation \cite{StreetFusion}, Eq. \eqref{def:pen}, and we focus on the implementation using the evolution operator of the 1D Heisenberg model \cite{Baxter} as well as the so-called $A$ gate \cite{Vala_shortPaper}. 


This note is organized as follows: 
In Section \ref{alg-setup}, we collect all the algebraic preliminaries we need for the rest of the paper and fix some notation. In Section \ref{Heisenberg model}, we collect all the necessary material for understanding the 1D Heisenberg model. In Section \ref{Pentagon equation}, we introduce the pentagon equation and  conclude with a digression to Yang-Baxter equation. In Section \ref{Results} we present our results: we discuss initially the form of duality we obtain from the application of the pentagon equation. Then, we explain how to transpile quantum circuits and conclude with the necessary conditions for the evolution operator of the 1D Heisenberg mode and the $A$-gate to satisfy the pentagon equation.

\begin{figure}[!htb]
    \centering
    \includegraphics[scale=0.25]{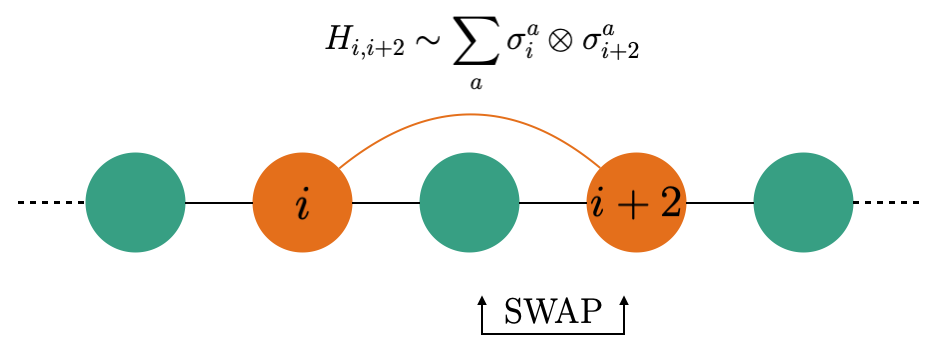}
    \caption{To implement the non-local interaction $H_{i,i+2}$ on the $i$-th and $i+2$-th qubits, SWAP gates must be implemented.}
    \label{fig:nonlocal}
\end{figure}


\section{Algebraic setup} \label{alg-setup}
Throughout the paper, we will denote a finite dimensional vector space over the complex numbers by  $V$ with the usual tensor product of vector spaces $\otimes$. The identity map from $V$ to $V$ will be denoted $\id_{V}$ and the twist map $\tau_{V,V}\colon V \otimes V \to V\otimes V$ reads $x\otimes y \mapsto y\otimes x$. A linear map $f\colon U \to V$, between two vector spaces $U$ and $V$, has an associated matrix with respect a basis of $U$ and $V$. By abusing notation, we will denote the linear map and its associated matrix by the same letter.

The Pauli matrices $\sigma_a$, $a \in \{ x,y,z \}$ which are the generators of the Lie algebra $\mathfrak{su}(2)$ of the Lie group $SU(2)$ read
\begin{align*}
    \sigma_x = \begin{pmatrix} 0 & 1 \\ 1 & 0   \end{pmatrix}, \quad \sigma_y = \begin{pmatrix} 0 & -\im \\ \im & 0   \end{pmatrix}, \quad \sigma_z = \begin{pmatrix} 1 & 0 \\ 0 & -1   \end{pmatrix},
\end{align*}
where $\im$ is the imaginary unit. It is worth mentioning here that $\sigma_{x}^{2}=\sigma_{y}^{2}=\sigma_{z}^{2}=\mathds{1}$. 
Based on this fact, one computes \[\exp(\im \theta \sigma_{a})=\cos(\theta)\mathds{1}+\im\sin(\theta)\sigma_{a},\] for $a=x,y,z$ and $\theta \in \mathbb{R}$. Precisely, the matrix exponentials of $\sigma_x$, $\sigma_y $ and $\sigma_z$ give rise to the rotation operator matrices $R_{x}(\theta)$, $R_{y}(\theta)$ and $R_{z}(\theta)$ respectively which read 
    \begin{equation*}
    \begin{aligned}
     R_{x}(\theta) &=  \begin{pmatrix} \cos(\theta/2) & -\im\sin(\theta/2) \\ -\im\sin(\theta/2) & \cos(\theta/2)   \end{pmatrix}, \\[1.2em] 
     R_{y}(\theta) &=  \begin{pmatrix} \cos(\theta/2) & \sin(\theta/2) \\ \sin(\theta/2) & \cos(\theta/2)   \end{pmatrix}, \\[1.2em]
     R_{z}(\theta) &= \begin{pmatrix} e^{-\im\theta/2} & 0 \\ 0 & e^{\im\theta/2}   \end{pmatrix}.
    \end{aligned}
\end{equation*}

\subsection*{Standard Gates}

A quantum circuit is a series of unitary transformations, called gates, which act on qubits. Gates which will appear in this note are: the Hadamard gate $H$ and the phase shift gate $S$ which  are applied to single qubits and read respectively
\begin{align*}
   H =  \frac{1}{\sqrt{2}}\begin{pmatrix} 1 & 1 \\ 1 & -1   \end{pmatrix}, \qquad S = \begin{pmatrix} 1 & 0 \\ 0 & \im   \end{pmatrix}.
\end{align*} The controlled NOT gate, denoted by CNOT, and the SWAP which are applied to 2-qubits and read respectively
\begin{align*}
    \CNOT = \begin{pmatrix}
1 & 0 & 0 & 0\\
0 & 1 &  0 & 0 \\
0 & 0 & 0 & 1 \\
0 & 0 & 1 & 0
\end{pmatrix}, \qquad \SWAP=\begin{pmatrix}
1 & 0 & 0 & 0\\
0 & 0 &  1 & 0 \\
0 & 1 & 0 & 0 \\
0 & 0 & 0 & 1
\end{pmatrix}
\end{align*} 
Note that the inverse of the SWAP matrix is equal to itself, symbolically holds $\SWAP^{-1} = \SWAP$.

\subsection*{The $A$ and $B$ gates}

While there are many standard gates, it would be convenient to reason about all non-local two-qubit operations at the same time, while considering as few parameters as possible. To do so, Zhang \emph{et al.} \cite{Vala_longPaper, Vala_shortPaper} introduced the three-parameter $A$ gate, where each choice of the three parameters corresponds to 
a local equivalence class of two-qubit gates. 
In particular, the gate  $A$  is defined as the product of the unitary matrices  $XX(c_1) \coloneqq e^{\im \frac{c_{1}}{2} \sigma_x \otimes \sigma_x}$, $ YY(c_2) \coloneqq e^{\im\frac{c_{2}}{2} \sigma_y \otimes \sigma_y}$ and $ZZ(c_3) \coloneqq e^{\im\frac{c_{3}}{2} \sigma_z \otimes \sigma_z}$ where
\begin{align*}
    XX(c_1) &= \begin{pmatrix} \cos \tfrac{c_1}{2} & 0 & 0 & \im \sin \tfrac{c_1}{2}  \\
                            0 &  \cos \tfrac{c_1}{2} & \im \sin \tfrac{c_1}{2} &0 \\
                            0 & \im \sin \tfrac{c_1}{2} & \cos \tfrac{c_1}{2} & 0 \\
                            \im \sin\tfrac{c_1}{2} &0 &0 & \cos \tfrac{c_1}{2} 
               \end{pmatrix}, \\[1.2em]
    YY(c_2) &= \begin{pmatrix} \cos \tfrac{c_2}{2} & 0 & 0 & -\im \sin \tfrac{c_2}{2}  \\
                            0 &  \cos \tfrac{c_2}{2} & \im \sin \tfrac{c_2}{2} &0 \\
                            0 & \im \sin \tfrac{c_2}{2} & \cos \tfrac{c_2}{2} & 0 \\
                            -\im \sin\tfrac{c_2}{2} &0 &0 & \cos \tfrac{c_2}{2} \
               \end{pmatrix},\\[1.2em]
    ZZ(c_3) &= \begin{pmatrix} 
 e^{\tfrac{\im c_3}{2}} & 0 & 0 & 0 \\
 0 & e^{-\tfrac{\im c_3}{2}} & 0 & 0 \\
 0 & 0 & e^{-\tfrac{\im c_3}{2}} & 0 \\
 0 & 0 & 0 & e^{\tfrac{\im c_3}{2}} \\
\end{pmatrix}.
\end{align*}

The analytic form of the $A$ gate is displayed in Fig.~\ref{fig:agate}.
See also \cite{Vala_longPaper, Vala_shortPaper} for the elegant geometry thereof.

\begin{figure*}[t!]
\begin{equation*}\label{Agate}
A = \begin{pmatrix}
  e^{\frac{\im c_3}{2}} \cos\left(\frac{1}{2}\left(c_1-c_2\right)  \right) & 0 & 0 &  \im e^{\frac{\im c_3}{2}} \sin\left(\frac{1}{2}\left(c_1-c_2\right)  \right)  \\
 0 &  e^{-\frac{\im c_3}{2}} \cos\left(\frac{1}{2}\left(c_1+c_2\right)  \right) & \im e^{\frac{-\im c_3}{2}} \sin\left(\frac{1}{2}\left(c_1+c_2\right)  \right) & 0 \\
 0 & \im e^{\frac{-\im c_3}{2}} \sin\left(\frac{1}{2}\left(c_1+c_2\right)  \right)  & e^{-\frac{\im c_3}{2}} \cos\left(\frac{1}{2}\left(c_1+c_2\right)  \right) & 0 \\
\im e^{\frac{\im c_3}{2}} \sin\left(\frac{1}{2}\left(c_1-c_2\right)  \right)  & 0 & 0 &    e^{\frac{\im c_3}{2}} \cos\left(\frac{1}{2}\left(c_1-c_2\right)  \right)
\end{pmatrix},
\end{equation*} 
\caption{An analytic form of the $A$ gate, where $c_{1}, c_{2}, c_{3}$ are integer coefficients.}
\label{fig:agate}
\end{figure*}

Associated with the $A$ gate, but  not as important as the $A$ gate for our considerations, is the $B$ gate which reads $B = e^{\frac{\pi \im}{4}  \sigma_x^1 \otimes \sigma_x^2} \cdot e^{\frac{\pi \im}{8} \sigma_y^1 \otimes \sigma_y^2}$. The $B$ gate is related to the $A$ gate as $A \sim B U B$, for some $U\in U(2)\times U(2)$ (see \cite[page 2]{Vala_longPaper}). 


\section{The 1D Heisenberg model} \label{Heisenberg model}
The Heisenberg model \cite{Sutherland2004} is a spin model of ferromagnetism on a lattice where the coupling energy $J$ between nearest neighbor lattice sites is positive and parallel alignment of local spins is favorable. The variables in the Heisenberg model are subject to a continuous internal symmetry which once broken to $\mathbb{Z}_2$ it yields the Ising model which is quite commonly used in the context of Variational Quantum Algorithms (VQAs) \cite{cerezo2021variational}. Both the Heisenberg and the Ising models are of particular interest in the theory of quantum integrability precisely due to their integrable nature; they satisfy certain equations for which one can find analytic solutions at any value of the coupling energy $J$ and the thermodynamic limit $N \to \infty$, where $N$ is the number of lattice sites or spins.

Our work is motivated by Ref. \cite{Gulania2021} which we summarize briefly. In this context,
we are interested especially for the 1D Heisenberg model with $N=2$. This is a case of particular interest since it is the only scenario where the components of the model's Hamiltonian commute. The Heisenberg Hamiltonian is defined as
\begin{equation}
\hat{H} = - \sum_{a} J_{a}(\sigma^{a}_{1}\otimes \sigma^{a}_{2})
\end{equation} where $a \in \{x,y,z\}$, $\sigma^{a}$ is the Pauli matrix at the direction of $a$, $J_{\alpha}$ is the coupling constant or interaction strength. The evolution operator of the Shr\"odinger equation is $e^{\im \hat{H}t/\hbar}$ is displayed in Fig.~\ref{fig:Heisenberg}.

\begin{figure*}[t!]
\begin{equation*}\label{def:evOp}
\begin{aligned}
 e^{\im\hat{H}t/\hbar}=\prod_{a} e^{\im J_{a}t(\sigma^{a}_{1}\otimes \sigma^{a}_{2})/\hbar} =\begin{pmatrix}
e^{\im\theta_{z}}\cos(\theta_{x}-\theta_{y}) & 0 & 0 & \im e^{\im\theta_{z}}\sin(\theta_{x}-\theta_{y})\\
0 & e^{-\im\theta_{z}}\cos(\theta_{x}+\theta_{y}) &  \im e^{-\im\theta_{z}}\sin(\theta_{x}+\theta_{y}) & 0 \\
0&\im e^{-\im\theta_{z}}\sin(\theta_{x}+\theta_{y})  &e^{-\im\theta_{z}}\cos(\theta_{x}+\theta_{y}) &0 \\
\im e^{-\im\theta_{z}}\sin(\theta_{x}-\theta_{y}) &0&0&e^{\im\theta_{z}}\cos(\theta_{x}-\theta_{y}),
\end{pmatrix}
\end{aligned}
\end{equation*}
\caption{An analytic form of the  evolution operator of the Shr\"odinger equation.}
\label{fig:Heisenberg}
\end{figure*}

Notice that $\gamma = \theta_{x}-\theta_{y}$ and $\theta_{x}+\theta_{y}= \theta_{x}+\theta_{y}$ in Fig. \ref{fig:Heisenberg} is based on a calculation of the individual components:
\begin{equation*}
    \begin{aligned}
        e^{\im J_{x}t(\sigma^{x}_{1}\otimes \sigma^{x}_{2})/\hbar} &=\left(\begin{brsm}
        \cos(\theta_{x}) & 0 & 0 & \sin(\theta_{x})\\
        0 & \cos(\theta_{x}) &  \im \sin(\theta_{x}) & 0 \\
        0 & \im \sin(\theta_{x})  & \cos(\theta_{x}) & 0 \\
        \im \sin(\theta_{x}) & 0 & 0 \cos(\theta_{x}) \
    \end{brsm}\right) \\
        e^{\im J_{y}t(\sigma^{y}_{1}\otimes \sigma^{y}_{2})/\hbar} &=\left(\begin{brsm}
        \cos(\theta_{x}) & 0 & 0 & -\im \sin(\theta_{y})\\
        0 & \cos(\theta_{y}) &  \im \sin(\theta_{y}) & 0 \\
        0 & \im \sin(\theta_{y})  & \cos(\theta_{y}) & 0 \\
        -\im \sin(\theta_{y}) & 0 & 0 & \cos(\theta_{y})
        \end{brsm}\right) \\
        e^{\im J_{z}t(\sigma^{z}_{1}\otimes \sigma^{z}_{2})/\hbar} &=\left(\begin{brsm}
        e^{\im \theta_{z}} & 0 & 0 & 0 \\
        0 & e^{-\im\theta_{z}} & 0 & 0 \\
        0 & 0  & e^{-\im\theta_{z}} & 0 \\
        0 & 0 & 0 & e^{\im\theta_{z}}
\end{brsm}\right).
    \end{aligned}
\end{equation*}

\begin{remark}\label{remark1}
One obtains the matrix  $e^{\im\hat{H}t/\hbar}$ from the $A$ gate by setting $c_{1}=2(\theta_{x}-\theta_{y})$, $c_{2}=2(\theta_{x}+\theta_{y})$ and $c_{3}=2\theta_{z}$ in matrix $A$.
\end{remark}The evolution operator of the 1D Heisenberg model as a quantum circuit appears in Fig.~\ref{fig:QuantumCirc}.

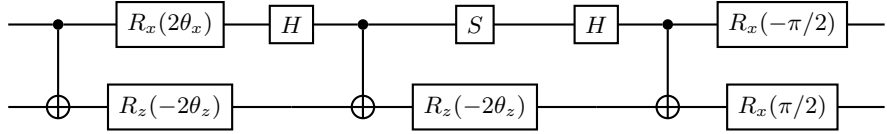
\begin{figure*}[t!]
    \begin{quantikz}
   \qw & \ctrl{1}  & \gate{R_{x}(2\theta_{x})}   &  \gate{H} & \ctrl{1} & \gate{S} & \gate{H} &\ctrl{1} & \gate{R_{x}(-\pi/2)} & \qw  \\
   \qw & \targ{}   & \gate{R_{z}(-2\theta_{z})}  & \qw       & \targ{}  &  \gate{R_{z}(-2\theta_{z})} & \qw & \targ{} & \gate{R_{x}(\pi/2)} & \qw   
    \end{quantikz}
    \caption{An optimal quantum circuit of $e^{\im\hat{H}t/\hbar}$ of 1D Heisenberg model, see  \cite{Gulania2021}. }
    \label{fig:QuantumCirc}
\end{figure*}

\section{Pentagon equation} \label{Pentagon equation}
The pentagon equation is an equation, which belongs in a infinite family of equations called polygon equations, and is similar to the Yang-Baxter equation. It appears in many branches of mathematics, such as representation theory and topological field theory \cite{Kirillov1990}, conformal field theory \cite{Belavin:1984vu}, Teichm\"uller theory \cite{fock1997dual}, Hopf algebras, quantum dilogarithm \cite{faddeev1994quantum,volkov2003noncommutative}, and operator theory, to name a few. See  \cite{Dimakis} and the rich bibliography. The pentagon equation has found applications in the context of topological quantum computing previously, see \cite{kitaev2006anyons} and \cite[Ch. 4]{Kasirajan2021}, as well as in the context of symmetries of non-local matrix product operators \cite{lootens2021matrix}. 

In this section, we define the pentagon equation. Then, we make  a digression on Yang-Baxter equation, pointing on what the two equations differ and on what the two equations look similar.

Let $V$ be a finite dimensional and let  $T\colon V\otimes V \to V \otimes V$ be a linear map. Define the maps \begin{equation*}
    T_{12}, T_{23}, T_{13}\colon V\otimes V\otimes V \to V\otimes V\otimes V
\end{equation*} by the formulae
\begin{equation*}
    \begin{aligned}
      T_{12}  &\coloneqq T \otimes \id_{V}
\\
 T_{23}  &\coloneqq \id_{V} \otimes T \\
 T_{13} &\coloneqq (\id_{V}  \otimes \tau_{V,V})^{-1} \circ (T\otimes \id_{V}) \circ (\id_{V} \otimes\tau_{V,V}).
    \end{aligned}
\end{equation*}
The \emph{pentagon equation} is defined by
\begin{align}\label{def:pen}
 T_{23}T_{12} = T_{12}T_{13}T_{23}.
\end{align} A solution of the pentagon equation is a linear operator $T$ which satisfies Eq. \eqref{def:pen}. In the bibliography, solutions of the pentagon equations are called a \emph{fusion operators}. Finding solutions of the pentagon equation, otherwise fusion operators, is a challenging process and an active area of research. 

\begin{remark}\label{rem:recipe}
One way to produce fusion operators, i.e. solutions of the pentagon equation, is from a bialgebra. A bialgebra $B$ is a vector space with an algebra structure and a compatible coalgebra structure. Denote by $m$ the product of the algebra and by $\theta_{x}+\theta_{y}$ the coproduct of the coalgebra. Then, the composite map $T:=(id \otimes m) \circ (\delta \otimes id)$ is a fusion operator, see \cite[Proposition 1.2]{StreetFusion} for more details. See also \cite{kashaev1995heisenberg} for a more physical oriented approach.
\end{remark}


One of our interests is to make use of the pentagon equation and compress quantum circuit, that means to reduce the number of gates. For this purpose, we depict the pentagon equation \eqref{def:pen} as a quantum circuit as shown in Fig. \ref{fig:fusion}. 
\begin{figure*}[t!]
\begin{center}
\begin{quantikz}
 & \gate[2]{T}  &  \qw           & \qw & \\
&              & \gate[2]{T} & \qw &  \\
  & \qw          &             & \qw  &
 \end{quantikz} \quad $=$ \quad \begin{quantikz}
& \qw          & \qw                      &  \gate[2]{T}    &  \qw                         & \gate[2]{T}               & \qw & \ \\
 & \gate[2]{T}  & \gate[swap]{}            &                 &\gate[swap]{}                 &                      & \qw & \\
 &              &                          & \qw             &                              &   \qw                & \qw & 
\end{quantikz}

 \comment{

\begin{quantikz}
\lstick{$a$} & \qw          & \qw                      &  \gate[2]{T}    &  \qw                         & \gate[2]{T}               & \qw & \ \\
& \gate[2]{T}  & \gate[2]{\textsc{SWAP}}  &                 &\gate[2]{\textsc{SWAP}}  &       & \qw &  \\
 &              &                          & \qw             &                              &   \qw                & \qw & 
\end{quantikz}
}
\end{center}
\caption{The circuit description of the pentagon equation \eqref{def:pen}. Twisting the quantum wires is  implemented by enforcing non-local interactions. Even within the landscape of NISQ devices, e.g. in certain neutral atom architectures, this is possible by directly applying appropriate microwave pulses.} 
\label{fig:fusion}
\end{figure*}
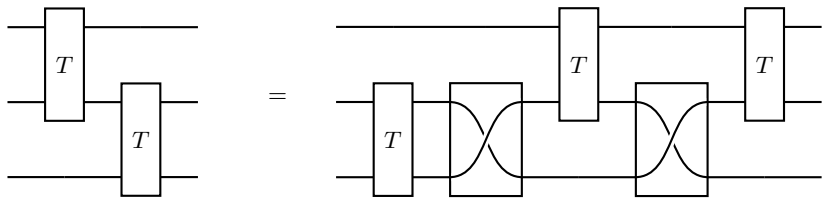

\subsubsection*{Digression: Yang-Baxter equation}

In this section, we will highlight the connection between the pentagon equation, which is the main topic of this paper with the rather famous Yang Baxter equation. The Yang-Baxter equation  has significant applications in the theory of 2D integrable systems and the theory of quantum groups, see \cite{FADDEEV1995,JIMBO1989,Chari1995ov,isaev2022lectures} and also has been used in the context of quantum computing, see \cite{Kauffman2004}.

For a finite dimensional vector space $V$ and a linear map $R\colon V\otimes V \to V\otimes V$ the Yang-Baxter equation is defined by  \begin{equation}\label{YBE}
    (R \otimes \id)(\id \otimes R)(R \otimes \id) = (\id \otimes R)(R \otimes \id)(\id \otimes R) 
\end{equation}or equivalently by
 \begin{align} \label{QYBE}
    R_{12}R_{23}R_{12} = R_{23}R_{12}R_{23}.
\end{align} where the subscripts indicate which tensor factors are being utilized. The quantum circuit implementation of the YBE \eqref{QYBE} is depicted in Fig. \ref{fig:ybe1}.

Among the solutions of the Yang-Baxter equation are the identity map $\id \colon V\otimes V\to V\otimes V$ and the transposition map $\tau\colon V\otimes V\to V\otimes V$ which reads $\tau(v_{1}\otimes v_{2}):= v_{2}\otimes v_{1}$. In Ref. \cite{Kauffman2004}, the authors showed that the Bell basis change
matrix consists of a solution of the Yang-Baxter equation and, moreover, it is a universal gate for quantum computing, among others. Other solutions may be obtained as universal $R$ elements of quansi triangular Hopf algebras \cite{ICMDrinfeld}.

\begin{figure}
    \centering
   \begin{tikzpicture}
\node[scale=0.75] 
{
\begin{quantikz}
 & \gate[2]{R}   &  \qw             & \gate[2]{R}    & \qw     & \\
 &              & \gate[2]{R}      &                & \qw     &  \\
 & \qw          &                  & \qw             &\qw     &
 \end{quantikz} \quad $=$ \begin{quantikz}
 & \qw                 &  \gate[2]{R}      &  \qw            & \qw     &\\
&   \gate[2]{R}       &                   & \gate[2]{R}     & \qw     &  \\
&                     &   \qw             & \qw             &\qw     &
 \end{quantikz}
 };
\end{tikzpicture}
    \caption{The circuit description of the Yang-Baxter Equation~\eqref{QYBE}.}
    \label{fig:ybe1}
\end{figure}
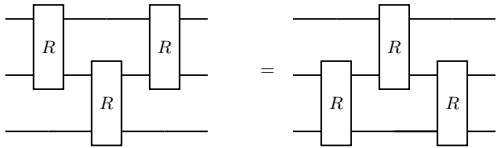

\begin{theorem}(Foklore) \label{thm: QYBE}
Let $R\colon V \otimes V \to V \otimes V$ be a linear map. Then $R$ is a solution of the Yang-Baxter equation \ref{QYBE} if and only if $R':=\tau_{V,V}\circ R$ satisfies the equation 
 \begin{equation} \label{QYBE2}
    R_{12}R_{13}R_{23} = R_{23}R_{13}R_{12} 
\end{equation}
\end{theorem}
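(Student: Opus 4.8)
The plan is to reduce the statement to the elementary relation between the ``braided'' form \eqref{QYBE} and the ``$R$-matrix'' form \eqref{QYBE2} of the equation, using only the permutation operators on $V^{\otimes 3}$. Write $P := \tau_{V,V}$ for the flip, which is an involution ($P^2 = \id_{V\otimes V}$), and let $P_{12}, P_{23}, P_{13}\colon V^{\otimes 3}\to V^{\otimes 3}$ be the corresponding transposition operators on the indicated tensor factors. These furnish a representation of the symmetric group $S_3$, so they satisfy $P_{ij}^2 = \id$ together with the Coxeter relations; the two instances I will actually need are $P_{12}P_{13}P_{23} = P_{13} = P_{23}P_{13}P_{12}$, which I would verify once at the level of permutations.

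First I would record how the subscripted operators of $R' = P\circ R$ relate to those of $R$. Since $R'$ is obtained from $R$ by post-composing with the flip on the two active factors, one has $R'_{12} = P_{12}R_{12}$, $R'_{23} = P_{23}R_{23}$ and $R'_{13} = P_{13}R_{13}$; the last identity uses the convention $R_{13} = (\id_V\otimes\tau_{V,V})^{-1}(R\otimes\id_V)(\id_V\otimes\tau_{V,V})$ supplied in the excerpt. I would also set down the equivariance rule $R_{cd}\,P_{ab} = P_{ab}\,R_{\sigma(c)\sigma(d)}$, where $\sigma$ is the transposition $(a\,b)$; this simply expresses that conjugating by a flip relabels tensor factors, and it is exactly what lets one move the permutation operators past the copies of $R$.

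The core of the argument is then a direct substitution. Inserting $R'_{ij} = P_{ij}R_{ij}$ into the left-hand side of \eqref{QYBE2} and repeatedly applying the equivariance rule to push every $P_{ij}$ to the far left, I expect to obtain $R'_{12}R'_{13}R'_{23} = P_{13}\,R_{23}R_{12}R_{23}$, and by the symmetric computation $R'_{23}R'_{13}R'_{12} = P_{13}\,R_{12}R_{23}R_{12}$, the key point being that the leftover strings of permutations on both sides collapse to the \emph{same} operator $P_{13}$. Since $P_{13}$ is invertible, \eqref{QYBE2} is then equivalent to $R_{23}R_{12}R_{23} = R_{12}R_{23}R_{12}$, which is precisely \eqref{QYBE}. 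All steps are reversible (the substitution is inverted by $R = P\circ R'$, as $P^2=\id$), so this yields the ``if and only if''.

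I expect the one genuinely error-prone step to be the index bookkeeping in the substitution: $R$ need not be symmetric, so intermediate terms such as $R_{32}$ and $R_{21}$ appear and must be tracked through the correct index permutations, and one must confirm that the residual permutation prefactors on the two sides genuinely coincide (both equal $P_{13}$) rather than merely being conjugate. Everything else is formal manipulation in the algebra generated by the $P_{ij}$ and the $R_{ij}$.
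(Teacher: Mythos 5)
Your proposal is correct. Note, however, that the paper itself gives no proof of this statement --- it is labelled ``Folklore'' and left unproved --- so there is no in-paper argument to compare against; your write-up supplies exactly the standard argument one would expect. I checked the two computational claims on which everything hinges: writing $R'_{ij}=P_{ij}R_{ij}$ (including $R'_{13}=P_{13}R_{13}$, which does follow from the paper's convention $R_{13}=(\id_V\otimes\tau_{V,V})^{-1}(R\otimes\id_V)(\id_V\otimes\tau_{V,V})$ since $P_{23}P_{12}P_{23}=P_{13}$) and pushing the transpositions leftward via $R_{cd}P_{ab}=P_{ab}R_{\sigma(c)\sigma(d)}$ indeed gives $R'_{12}R'_{13}R'_{23}=P_{12}P_{13}P_{23}\,R_{23}R_{12}R_{23}$ and $R'_{23}R'_{13}R'_{12}=P_{23}P_{13}P_{12}\,R_{12}R_{23}R_{12}$, and both permutation words equal $P_{13}$, so \eqref{QYBE2} for $R'$ is equivalent to $R_{23}R_{12}R_{23}=R_{12}R_{23}R_{12}$, i.e.\ to \eqref{QYBE}; invertibility of $P_{13}$ and $P^2=\id$ make every step reversible. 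The intermediate objects $R_{32}$ and $R_{21}$ that you flag as the error-prone part do appear (e.g.\ $R_{12}P_{13}=P_{13}R_{32}$) but are absorbed correctly in the second leftward pass. One small point worth making explicit: as printed, \eqref{QYBE2} is stated with the letter $R$ rather than $R'$; your reading of it as a condition on the subscripted operators built from $R'$ is the only one under which the theorem is true, and is clearly what the authors intend.
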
 The circuit description of the equation \eqref{QYBE2} is shown in Fig.~\ref{fig:pent}.

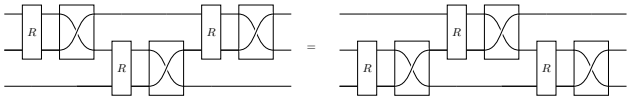
\begin{figure}
    \centering
    \begin{tikzpicture}
\node[scale=0.48] 
{
\begin{quantikz}
 &    \gate[2]{R}    &         \gate[swap]{} & \qw    & \qw           & \gate[2]{R}        &   \gate[swap]{}      & \qw    \\
 &    \qw             &       \qw        &  \gate[2]{R}       & \gate[swap]{} &    \qw        &          \qw           & \qw  \\
 &     \qw            &     \qw           &\qw      & \qw           &         \qw     &  \qw    & \qw      
\end{quantikz} \quad $=$ \quad

\begin{quantikz}
& \qw               &   \qw            & \gate[2]{R}    &         \gate[swap]{}  &\qw   &   \qw             & \qw \\
 &    \gate[2]{R}    &         \gate[swap]{}   &  \qw            &   \qw         & \gate[2]{R}    &         \gate[swap]{}  & \qw    \\
  &     \qw         &  \qw             &   \qw           & \qw           & \qw   &  \qw             & \qw
\end{quantikz} 
\label{fig:3co}
};
\end{tikzpicture}
    \caption{The circuit description of the Eq~\eqref{QYBE2}.}
    \label{fig:pent}
\end{figure}

\begin{remark}
   The pentagon equation \eqref{def:pen} is Eq.~\eqref{QYBE2} with the term $R_{13}$ omitted on the right-hand side.
\end{remark}

There is an analogous result to Theorem~\ref{thm: QYBE} for the pentagon equation.

\begin{theorem}(Street \cite{StreetFusion}) \label{thm: Street}
Let $T\colon V \otimes V \to V \otimes V$ be a linear map. Then, $T$ is a fusion operator if and only if $T':=\tau_{V,V}\circ T$ satisfies the 3-cocycle condition 
\begin{equation}\label{def:3-cocycle}
   \begin{aligned}
 (T' \otimes {\id}_{V}) \circ ({\id}_{V} \otimes \tau_{V,V}) \circ (T' \otimes {\id}_{V}) \\
 = ({\id}_{A} \otimes T')\circ (T' \otimes {\id}_{V}) \circ ({\id}_{V} \otimes T')
\end{aligned}  
\end{equation}

\end{theorem}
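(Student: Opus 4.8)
The plan is to exploit that $\tau_{V,V}$ is an involution, so that $T\mapsto T'=\tau_{V,V}\circ T$ is a bijection with inverse $T'\mapsto T=\tau_{V,V}\circ T'$; it therefore suffices to show that, under the substitution $T=\tau_{V,V}\circ T'$, the pentagon equation \eqref{def:pen} is equivalent to the $3$-cocycle condition \eqref{def:3-cocycle}. Writing $\tau_{12}:=\tau_{V,V}\otimes\id_V$ and $\tau_{23}:=\id_V\otimes\tau_{V,V}$, functoriality of $\otimes$ gives $T_{12}=\tau_{12}(T'\otimes\id_V)$ and $T_{23}=\tau_{23}(\id_V\otimes T')$, while the definition of $T_{13}$ together with $\tau_{23}^{-1}=\tau_{23}$ yields $T_{13}=\tau_{23}T_{12}\tau_{23}=\tau_{23}\tau_{12}(T'\otimes\id_V)\tau_{23}$.

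First I would substitute these three expressions into $T_{23}T_{12}=T_{12}T_{13}T_{23}$ and cancel the adjacent pair $\tau_{23}\tau_{23}=\id_{V}$ produced on the right-hand side. This turns the pentagon equation into a single ``mixed'' identity in which every factor is one of $T'\otimes\id_V$, $\id_V\otimes T'$, $\tau_{12}$ or $\tau_{23}$. The manipulations I would then use are exactly three families of relations among these symbols: the involutions $\tau_{12}^2=\tau_{23}^2=\id_{V}$; the braid relation $\tau_{12}\tau_{23}\tau_{12}=\tau_{23}\tau_{12}\tau_{23}$; and the relabelling identities
\begin{equation*}
\tau_{23}(T'\otimes\id_V)\tau_{23}=T'_{13}=\tau_{12}(\id_V\otimes T')\tau_{12},\qquad \tau_{12}T'_{13}\tau_{12}=\id_V\otimes T',
\end{equation*}
where $T'_{13}$ denotes $T'$ acting on the outer two tensor factors. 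These last identities are exact --- they hold for arbitrary, possibly non-invertible $T'$ --- because conjugating $T'\otimes\id_V$ by a twist literally transports a tensor slot from one position to another.

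With these in hand the reduction is purely formal. Using the relabelling identities I would rewrite both sides of the mixed identity so that the middle factor becomes $T'_{13}$; for instance the left-hand side becomes $\tau_{23}\tau_{12}\,T'_{13}\,(T'\otimes\id_V)$. Left-multiplying by the invertible element $(\tau_{23}\tau_{12})^{-1}=\tau_{12}\tau_{23}$ and collapsing the resulting twist words with the braid relation (which gives, e.g., $\tau_{12}\tau_{23}\tau_{12}\tau_{23}=\tau_{23}\tau_{12}$) strips off the outer twists; after reinserting the relabelling identities and a final multiplication by $\tau_{23}$, all twist maps cancel and one is left precisely with $(T'\otimes\id_V)\circ(\id_V\otimes\tau_{V,V})\circ(T'\otimes\id_V)=(\id_V\otimes T')\circ(T'\otimes\id_V)\circ(\id_V\otimes T')$, which is \eqref{def:3-cocycle}. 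Because every operation used is either the substitution of an exact identity or multiplication by an invertible twist, each step is reversible, so the same chain read backwards proves the converse; this is what lets us avoid ever inverting $T'$.

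I expect the only real obstacle to be the bookkeeping in the middle step: since $T'$ is not symmetric, one must resist the temptation to ``commute'' a twist past a copy of $T'$, and instead use the twists solely to cancel in involutive pairs, to apply the braid relation, and to invoke the relabelling identities in the precise slot-transporting form above. Keeping the order of the tensor slots straight throughout --- and in particular checking that the intermediate operator is genuinely $T'_{13}$ and not a slot-reversed variant --- is where the care is needed; once the relabelling identities are pinned down, the collapse of the twist words is routine.
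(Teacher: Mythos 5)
The paper does not prove Theorem~\ref{thm: Street} at all --- it is quoted from Street \cite{StreetFusion} and accompanied only by the circuit picture in Fig.~\ref{fig:3co} --- so there is no in-paper argument to compare against; what you have written is a self-contained replacement for the citation, and it is correct. I checked the computation in full: with $T=\tau_{V,V}\circ T'$ one gets $T_{12}=\tau_{12}T'_{12}$, $T_{23}=\tau_{23}T'_{23}$ and $T_{13}=\tau_{23}\tau_{12}T'_{12}\tau_{23}$, the $\tau_{23}\tau_{23}$ pair on the right of the pentagon cancels as you say, and your relabelling identities (e.g.\ $\tau_{12}T'_{23}\tau_{12}=T'_{13}$, valid for arbitrary non-invertible $T'$) turn the two sides into $\tau_{23}\tau_{12}\,T'_{13}\,T'_{12}$ and $\tau_{12}\tau_{23}\tau_{12}\,T'_{23}\,T'_{12}\,T'_{23}$ respectively; left-multiplying by $\tau_{12}\tau_{23}$ and collapsing the twist words via the braid relation yields exactly $T'_{12}\,\tau_{23}\,T'_{12}=T'_{23}\,T'_{12}\,T'_{23}$, which is \eqref{def:3-cocycle} (reading $\id_A$ there as the obvious typo for $\id_V$). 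Since every move is substitution of an exact identity or multiplication by an invertible permutation operator, the chain reverses and gives the ``if and only if.'' The only stylistic difference from the source you are replacing is that Street's argument is categorical/diagrammatic, whereas yours is an elementary leg-notation calculation in $\mathrm{End}(V^{\otimes 3})$; the latter is arguably better suited to this paper's audience. If you write it up, I would execute the right-hand-side reduction explicitly (the step $T'_{12}\tau_{23}\tau_{12}=\tau_{23}\tau_{12}T'_{23}$ is the one place where a sign-of-slot error is easy to make), but as proposed the argument is sound.
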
 The circuit description of the 3-cocycle is shown in Fig~\ref{fig:3co}.

\comment{
\begin{center}
\begin{tikzpicture}
\node[scale=0.7] 
{
\begin{quantikz}
\lstick{$a$} & \gate[2]{T'}        & \qw                      &  \gate[2]{T'}       & \qw  & \lstick{$a$}  \\
\lstick{$b$} &                     & \gate[swap]{}  &                     & \qw &\lstick{$c$}  \\
\lstick{$c$} &     \qw             &                          & \qw                 &  \qw  & \lstick{$b$}         
\end{quantikz} \quad $=$ \quad \begin{quantikz}
\lstick{$a$} &   \qw            & \gate[2]{T'}      &   \qw             & \qw &\lstick{$c$} \\
\lstick{$b$} &   \gate[2]{T'}   &                   &   \gate[2]{T'}    & \qw  &\lstick{$b$}   \\
\lstick{$c$} &                  &   \qw             &    \qw             & \qw &\lstick{$a$}
\end{quantikz} 
\label{fig:3co}
};
\end{tikzpicture}
\end{center}

The 3-cocycle circuit is easier to understand if we consider the wire twists in full detail:}

\begin{figure}
    \centering
    \begin{tikzpicture}
\node[scale=0.54] 
{\begin{quantikz}
 &    \gate[2]{T}    &         \gate[swap]{}    & \qw           & \gate[2]{T}        &   \gate[swap]{}      & \qw    \\
 &    \qw             &       \qw               & \gate[swap]{} &    \qw        &          \qw           & \qw  \\
 &     \qw            &     \qw                 & \qw           &         \qw     &  \qw    & \qw      
\end{quantikz} \quad $=$ \quad 
\begin{quantikz}
& \qw               &   \qw            & \gate[2]{T}    &         \gate[swap]{}  &\qw   &   \qw             & \qw \\
 &    \gate[2]{T}    &         \gate[swap]{}   &  \qw            &   \qw         & \gate[2]{T}    &         \gate[swap]{}  & \qw    \\
  &     \qw         &  \qw             &   \qw           & \qw           & \qw   &  \qw             & \qw
\end{quantikz} 

\label{fig:3co}
};
\end{tikzpicture}
    \caption{The circuit description of the 3-cocycle~\eqref{def:3-cocycle}.}
    \label{fig:3co}
\end{figure}
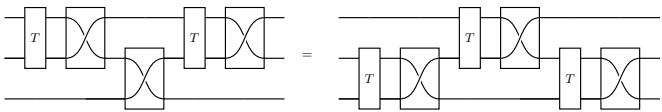

Reading this circuit from right to left is more instructive for reasons that will become apparent in the next section. Naturally, the easiest way to implement this type of circuit, should a fusion operator $T$ exist, is to use a SWAP gate. As a matter of fact, in what follows we assume that quantum wire twist is indeed implemented using SWAP gates. However, as mentioned in Fig. \ref{fig:fusion}, in certain architectures, for example in the case of analog neutral atom approaches, it is possible to directly implement such a twist when indeed one is interested in the next-to-nearest-neighbor interactions, just as shown above.





\section{Results}\label{Results}

In this section, we combine the pentagon equation and the $A$ gate and the evolution operator of the 1D Heisenberg model to the compression of quantum circuits. Initially, we discuss how the pentagon equation is understood as a quantum circuit offering a ``local/non-local" form of duality.  Then, we show that the $A$ gate and the evolution operator of the 1D Heisenberg model for $N=2$ are both solutions of the pentagon equation subject to a number of constraining equations. 

\subsection*{Local/non-local duality}


Our first result is an observation of the nature of the solutions of the pentagon equation and the relevance they can have in the context of quantum circuits. 

Purely from the diagrammatic definition of the pentagon equation, Fig.~\ref{fig:fusion}, and especially the 3-cocycle condition, Fig.~\ref{fig:3co}, it is possible to realize that should 2-qubit fusion operators of interest be found, the corresponding circuits can be viewed in two dual ways. Before we proceed, we stress again that ``Heisenberg interactions" refer only to nearest-neighbor interactions, and we interchangeably use it with the phrase ``local interactions". Similarly, as mentioned in the introduction, non-nearest neighbor (long-range) interactions refer to (2-qubit interactions) non-adjacent qubits.


\comment{
Assume that a circuit
}

Let us consider one point of view first:  one can reduce the number of non-Heisenberg-type interactions, effectively reducing the corresponding subcircuit depth. However, this requires the ability to enforce the corresponding entangling interactions $T$ which are solutions of the pentagon equation. Currently, for most quantum computing architectures with limited interqubit connectivity, implementation of non-local interactions requires the application of SWAP gates and as a result finding appropriate fusion operators $T$ could have a significant impact on circumventing SWAP gates whose high cost is due to the three CNOT that implement them. 

Interestingly, this point of view can be reversed. Assuming that interesting fusion operators are found, one can utilize the pentagon equation to expand a 
circuit that involves Heisenberg-type interactions to a circuit that also involves non-Heisenberg-type  local interactions at the cost of increased depth. 


Although this may seem counterintuitive, with the current NISQ devices \cite{weidenfeller2022scaling,bravyi2022future}, where the effective depth of quantum circuits as well as the connectivity is limited and expensive, one could try to envision situations where highly complex circuits are easier to control using locally-entangling gates. Indeed, with substantial numbers of qubits, high connectivity is unlikely \cite{bravyi2022future}, considering the quadratic growth in the number of pairs of qubits, although there exists some evidence 
\cite{isenhower2011multibit,pagano2020quantum}
that direct application of non-local entangling interactions at the pulse level seems feasible, e.g., in cold-atomic architectures.

\subsection*{Transpiling Quantum Circuits}

Our second result is essentially positive. It shows how to apply the pentagon equation to transpiling quantum circuits. In particular, the pentagon equation makes it possible to transpile a certain quantum circuit with five gates into a quantum circuit with two gates in which the twist of the quantum wires is done using SWAP gates. Specifically:

\begin{theorem} 
\label{thm3}
Let $T$ be a unitary gate and a quantum circuit which contains the following component as a sub-circuit:
\begin{center}
\quad \begin{quantikz}
& \qw          & \qw                      &  \gate[2]{T}    &  \qw                         & \gate[2]{T}               & \qw & \ \\
& \gate[2]{T}  & \gate[2]{\textsc{SWAP}}  &                 &\gate[2]{\textsc{SWAP}}  &       & \qw &  \\
 &              &                          & \qw             &                              &   \qw                & \qw & 
\end{quantikz}
\end{center} If $T$ satisfies the pentagon equation \eqref{def:pen}, then the above component can be replaced by  \begin{center}
\begin{quantikz}
& \gate[2]{T}  &  \qw         &  \qw  & \\
&              & \gate[2]{T} &  \qw& \\
& \qw          &             &  \qw& 
 \end{quantikz}.\end{center} 
\end{theorem}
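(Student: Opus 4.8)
The plan is to translate both quantum circuits into linear operators on $V \otimes V \otimes V$ and then to recognize the two resulting expressions as the two sides of the pentagon equation \eqref{def:pen}. I would first fix the convention that in a circuit read left to right the leftmost gate acts first, so that the operator realized by a circuit is obtained by composing its column operators from right to left. Under this dictionary the target (two-gate) circuit applies $T$ to the top pair of wires and then to the bottom pair, yielding the operator $T_{23}\,T_{12}$, which is exactly the left-hand side of \eqref{def:pen}.

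Next I would read off the five-gate source circuit column by column. Identifying each wire twist with the SWAP gate, i.e. taking $\SWAP$ to implement $\tau_{V,V}$ on the relevant two factors, the columns give, in order of application: $T_{23} = \id_{V} \otimes T$ on the bottom pair, then $\id_{V} \otimes \SWAP$, then $T_{12} = T \otimes \id_{V}$ on the top pair, then $\id_{V} \otimes \SWAP$ again, and finally $T_{12}$ on the top pair. Composing from right to left, the source circuit realizes the operator $T_{12}\,(\id_{V} \otimes \SWAP)\,T_{12}\,(\id_{V} \otimes \SWAP)\,T_{23}$.

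The key step is then to recognize the conjugated middle block as $T_{13}$. Since $\SWAP^{-1} = \SWAP$, the defining formula $T_{13} = (\id_{V} \otimes \tau_{V,V})^{-1} \circ (T \otimes \id_{V}) \circ (\id_{V} \otimes \tau_{V,V})$ reduces to $T_{13} = (\id_{V} \otimes \SWAP)\,T_{12}\,(\id_{V} \otimes \SWAP)$. Substituting this into the source-circuit operator collapses it to $T_{12}\,T_{13}\,T_{23}$, the right-hand side of \eqref{def:pen}. By hypothesis $T$ is a solution of the pentagon equation, so $T_{23}\,T_{12} = T_{12}\,T_{13}\,T_{23}$, the two operators coincide, and the source sub-circuit may be replaced by the target sub-circuit wherever it occurs.

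The only real obstacle is bookkeeping rather than computation: one must be careful about the left-to-right time ordering versus the right-to-left operator composition, and about which pair of wires each SWAP acts on, so that the two flanking SWAPs correctly realize the conjugation that defines $T_{13}$ rather than a twist on the wrong tensor factors. Once this dictionary is pinned down, the identification is a direct rewriting that never touches the explicit matrix entries of $T$, and the conclusion is immediate from \eqref{def:pen}.
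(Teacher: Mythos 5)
Your proposal is correct and follows essentially the same route as the paper, whose proof is a one-line appeal to the pentagon equation \eqref{def:pen}; you simply supply the bookkeeping the paper omits, namely reading the circuits as operators on $V\otimes V\otimes V$ and recognizing $(\id_V\otimes\SWAP)\,T_{12}\,(\id_V\otimes\SWAP)$ as $T_{13}$. No gaps.
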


\begin{proof}
It is straightforward from the assumption that $T$ satisfies the pentagon equation \eqref{def:pen}.
\end{proof}

Theorem~\ref{thm3} provides a  way to transpile a quantum circuit assuming that the unitary gate $T$ is a fusion operator. The $A$-gate satisfies the pentagon equation and hence Theorem~\ref{thm3} can be applied for transipiling a quantum circuit for specific values of its coefficients of its coefficients $c_{1}$, $c_{2}$ and $c_{3}$. Precisely, we have the following.

\begin{proposition}
The $A$-gate, as defined in Eq. \eqref{Agate}, satisfies the pentagon equation \eqref{def:pen} if and only if the equations of Fig~\ref{fig:eqsA} are satisfied. 
\end{proposition}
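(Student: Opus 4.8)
The plan is to reduce the claim to an explicit, finite matrix computation. Since $V = \C^2$ here, the fusion operator $T = A$ acts on $\C^2 \otimes \C^2 = \C^4$, and the three maps $T_{12}, T_{23}, T_{13}$ act on $\C^2 \otimes \C^2 \otimes \C^2 = \C^8$. The pentagon equation \eqref{def:pen}, namely $T_{23}T_{12} = T_{12}T_{13}T_{23}$, is therefore an equality of two $8\times 8$ matrices, each entry of which is a trigonometric polynomial in the half-angles $c_1/2$, $c_2/2$, $c_3/2$ (together with the phase factors $e^{\pm \im c_3/2}$). The first step is to assemble these $8\times 8$ matrices concretely.

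First I would write down the three lifted operators explicitly. Using the definitions $T_{12} = A \otimes \id_V$ and $T_{23} = \id_V \otimes A$, these are immediate Kronecker products of the $4\times 4$ matrix $A$ (given in Fig.~\ref{fig:agate}) with the $2\times 2$ identity, placed in the appropriate tensor slots. For $T_{13} = (\id_V \otimes \tau_{V,V})^{-1} \circ (A \otimes \id_V) \circ (\id_V \otimes \tau_{V,V})$, I would compute it by conjugating $A \otimes \id_V$ with the permutation matrix implementing the twist $\tau_{V,V}$ on the second and third factors; concretely this is the $8\times 8$ SWAP on qubits $2,3$, and since $\tau$ is an involution the conjugation is just $P_{23}(A\otimes\id)P_{23}$. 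Once all three are in hand, I would form the products $T_{23}T_{12}$ and $T_{12}T_{13}T_{23}$.

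The second step is to subtract the two sides and demand that every entry of the difference vanish. This yields a system of polynomial (in fact trigonometric) equations in $c_1, c_2, c_3$; after using elementary product-to-sum identities and exploiting the sparsity and block structure of $A$ (note $A$ only couples the basis pairs $\{|00\rangle,|11\rangle\}$ and $\{|01\rangle,|10\rangle\}$ within each $2\times 2$ block), most of the $64$ entry-equations will be automatically satisfied or redundant, collapsing to the short list displayed in Fig.~\ref{fig:eqsA}. The forward direction (pentagon $\Rightarrow$ equations) and the converse (equations $\Rightarrow$ pentagon) are then the two readings of this single entrywise identity, so proving the ``if and only if'' amounts to showing the listed equations are exactly the nonredundant content of the matrix equation. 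I would present the argument as: expand both sides, identify which entries are nontrivial, and verify that their equality is equivalent to the constraint set in Fig.~\ref{fig:eqsA}.

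The main obstacle I anticipate is purely organizational rather than conceptual: managing the bookkeeping of the $8\times 8$ products without error, and correctly extracting a minimal, nonredundant set of constraints from the many entrywise equalities (several will coincide after trigonometric simplification, and one must be careful that no independent constraint is dropped). The twist operator $T_{13}$ is the one place where an indexing slip is easy to make, so I would verify it against the defining formula on basis vectors before proceeding. Because the computation is finite and mechanical, I would either carry it out by hand exploiting the block structure or corroborate it with a symbolic algebra check, and then simply record that the resulting simplified system coincides with Fig.~\ref{fig:eqsA}.
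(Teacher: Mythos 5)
Your proposal follows essentially the same route as the paper: the paper's proof is simply ``substitute the $A$ gate into the pentagon equation and equate both sides,'' which is exactly the direct $8\times 8$ matrix computation you describe, merely spelled out in more detail (including the correct handling of $T_{13}$ by conjugation with the swap on the second and third tensor factors). Your version is a careful, correct elaboration of the paper's one-line argument and adds no genuinely different idea.
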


\begin{proof}
Substituting the $A$ gate into the pentagon equation and equating both sides one arrives at solving the system of equations in Fig. \ref{fig:eqsA}. 
\end{proof}

Solving the equations of Fig.~\ref{fig:eqsA} one obtains $c_{1}=c_{2}=0$ and $c_{3}=-2\pi k$ where $k \in \mathbb{Z}$. This yields 
the triple eigenvalue of 1 in the corresponding unitary. The convex hull of the eigenvalues will thus not include the origin, which does not make it possible to reapply the proof technique of Corollary 2 in Ref. \cite{Vala_longPaper} to show that the parameter values yield a perfect entangler. Therefore, the $A$ gate is a fusion operator albeit of limited use. 

Based on the relation of the matrix $A$ and the evolution operator of the 1D Heisenberg model, cf. Remark (\ref{remark1}), we have the following.

\begin{corollary}
The evolution operator \eqref{def:evOp} of the 1D Heisenberg model for $N=2$ is a solution of the Pentagon equation \eqref{def:pen} if and only if the equations in Fig.~\ref{fig:eqsH} are satisfied. That is the case, when $\theta_{x}=\theta_{y}=0$ and $\theta_{z}=-k\pi$ for $k\in \mathbb{Z}$.
\end{corollary}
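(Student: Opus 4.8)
The plan is to reduce the statement entirely to the Proposition already proved for the $A$ gate, exploiting the explicit identification of the two matrices recorded in Remark~\ref{remark1}. First I would invoke that remark: the evolution operator $e^{\im\hat{H}t/\hbar}$ is literally equal, entry for entry, to the matrix $A$ under the substitution $c_1 = 2(\theta_x-\theta_y)$, $c_2 = 2(\theta_x+\theta_y)$, $c_3 = 2\theta_z$. Since the operators $T_{12}, T_{23}, T_{13}$ appearing in the pentagon equation \eqref{def:pen} are built from $T$ in a purely functorial way — tensoring with the identity and conjugating by the fixed twist $\tau_{V,V}$ — two equal $8\times 8$ matrices yield identical pentagon equations. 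Hence $e^{\im\hat{H}t/\hbar}$ is a fusion operator if and only if the $A$ gate with these particular parameter values is one, which by the Proposition holds if and only if the equations of Fig.~\ref{fig:eqsA}, rewritten in the variables $\theta_x,\theta_y,\theta_z$ through the substitution — that is, precisely the equations of Fig.~\ref{fig:eqsH} — are satisfied. This establishes the ``if and only if'' part of the corollary.

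Next I would solve the resulting constraints to pin down the parameter family. The Proposition's solution set is $c_1=c_2=0$ and $c_3=-2\pi k$ with $k\in\mathbb{Z}$. Substituting back, $c_1 = 2(\theta_x-\theta_y)=0$ and $c_2 = 2(\theta_x+\theta_y)=0$ force $\theta_x-\theta_y=0$ and $\theta_x+\theta_y=0$ simultaneously, whence $\theta_x=\theta_y=0$; and $c_3 = 2\theta_z = -2\pi k$ gives $\theta_z=-k\pi$. This reproduces exactly the claimed family $\theta_x=\theta_y=0$, $\theta_z=-k\pi$.

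The one point requiring care — and the closest thing to an obstacle — is confirming that the change of variables $(\theta_x,\theta_y,\theta_z)\mapsto(c_1,c_2,c_3)$ is a bijection, so that the equivalence transfers cleanly and neither spurious nor missing solutions are introduced. This map is linear with matrix $\begin{pmatrix} 2 & -2 & 0 \\ 2 & 2 & 0 \\ 0 & 0 & 2 \end{pmatrix}$ of determinant $16\ne 0$, hence invertible, so the two solution sets are in exact correspondence and the biconditional is preserved. Everything else is direct substitution into results already in hand, so no genuinely new computation is needed beyond mechanically rewriting Fig.~\ref{fig:eqsA} in the Heisenberg variables to obtain Fig.~\ref{fig:eqsH}.
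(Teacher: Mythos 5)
Your proposal is correct, and it takes a cleaner route than the one the paper actually writes down. The paper's proof of the corollary, despite the "Based on the relation of the matrix $A$ and the evolution operator, cf.\ Remark~\ref{remark1}" framing, proceeds by substituting the evolution operator directly into the pentagon equation and re-deriving the system of Fig.~\ref{fig:eqsH} from scratch, then solving it. You instead treat the statement as a genuine corollary of the Proposition: since the two $4\times4$ matrices coincide under the reparametrization of Remark~\ref{remark1}, and the maps $T_{12},T_{13},T_{23}$ depend on $T$ only through its matrix, the pentagon equation for one is literally the pentagon equation for the other, and the invertibility of the linear change of variables (your determinant-$16$ check) transports the solution set without loss or gain. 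This buys you the result with no new computation and makes the logical dependence on the Proposition explicit, which the paper leaves implicit. One small caveat worth flagging: the substitution as stated in Remark~\ref{remark1}, $c_1=2(\theta_x-\theta_y)$, $c_2=2(\theta_x+\theta_y)$, does not actually match the displayed matrices entry by entry (that would require $c_1=2\theta_x$, $c_2=2\theta_y$, so that $\tfrac{1}{2}(c_1\pm c_2)=\theta_x\pm\theta_y$); this appears to be a slip in the paper rather than in your argument, and since both candidate substitutions are invertible linear maps sending $c_1=c_2=0$ to $\theta_x=\theta_y=0$, your conclusion $\theta_x=\theta_y=0$, $\theta_z=-k\pi$ is unaffected either way.
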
 

\begin{proof}
Substituting the evolution operator of the 1D Heisenberg model into the pentagon equation and equating both sides, one arrives at solving the system of equations of Fig. \ref{fig:eqsH}. Solving them, we obtain the values for the coefficients $\theta_{x}$, $\theta_{y}$ and $\theta_{z}$.
\end{proof}


\begin{figure*}[htb!]
\begin{widetext}
\begin{align*}\label{equ1}
e^{2 \im \theta_{z}} \cos ^2\left(\theta_{x}-\theta_{y}\right)-e^{\im\theta_{z}} \cos \left(\theta_{x}-\theta_{y}\right) \left(e^{2 \im\theta_{z}} \cos ^2\left(\theta_{x}-\theta_{y}\right)-e^{2 \im\theta_{z}} \sin ^2\left(\theta_{x}-\theta_{y}\right)\right)&=0\\
\im e^{2 \im\theta_{z}} \sin \left(\theta_{x}-\theta_{y}\right) \cos \left(\theta_{x}-\theta_{y}\right)-\im e^{\im \theta_{z}} \sin \left(\theta_{x}-\theta_{y}\right) \left(e^{2 \im \theta_{z}} \cos ^2\left(\theta_{x}-\theta_{y}\right)-e^{2 \im \theta_{z}} \sin ^2\left(\theta_{x}-\theta_{y}\right)\right)&=0\\
 \im e^{2 \im \theta_{z}} \sin \left(\theta_{x}-\theta_{y}\right) \cos \left(\theta_{x}-\theta_{y}\right)-2 \im e^{3 \im \theta_{z}} \sin \left(\theta_{x}-\theta_{y}\right) \cos ^2\left(\theta_{x}-\theta_{y}\right)&=0\\
 2 e^{3 \im \theta_{z}} \sin ^2\left(\theta_{x}-\theta_{y}\right) \cos \left(\theta_{x}-\theta_{y}\right)-e^{2 \im \theta_{z}} \sin ^2\left(\theta_{x}-\theta_{y}\right)&=0\\
\cos \left(\theta_{x}-\theta_{y}\right) \cos \left(\theta_{x}+\theta_{y}\right)-e^{-\im \theta_{z}} \cos \left(\theta_{x}+\theta_{y}\right) \left(e^{2 \im \theta_{z}} \cos ^2\left(\theta_{x}-\theta_{y}\right)-e^{2 \im \theta_{z}} \sin ^2\left(\theta_{x}-\theta_{y}\right)\right)&=0\\
\im \sin \left(\theta_{x}+\theta_{y}\right) \cos \left(\theta_{x}-\theta_{y}\right)-\im e^{-\im \theta_{z}} \sin \left(\theta_{x}+\theta_{y}\right) \left(e^{2 \im \theta_{z}} \cos ^2\left(\theta_{x}-\theta_{y}\right)-e^{2 \im \theta_{z}} \sin ^2\left(\theta_{x}-\theta_{y}\right)\right)&=0\\ 
\im \sin \left(\theta_{x}-\theta_{y}\right) \cos \left(\theta_{x}+\theta_{y}\right)-2 \im e^{\im \theta_{z}} \sin \left(\theta_{x}-\theta_{y}\right) \cos \left(\theta_{x}-\theta_{y}\right) \cos \left(\theta_{x}+\theta_{y}\right)&=0\\
-\sin \left(\theta_{x}-\theta_{y}\right) \sin \left(\theta_{x}+\theta_{y}\right)+2 e^{\im \theta_{z}} \sin \left(\theta_{x}-\theta_{y}\right) \sin \left(\theta_{x}+\theta_{y}\right) \cos \left(\theta_{x}-\theta_{y}\right)&=0\\ 
\cos \left(\theta_{x}-\theta_{y}\right) \cos \left(\theta_{x}+\theta_{y}\right)-e^{\im \theta_{z}} \cos \left(\theta_{x}-\theta_{y}\right) \left(e^{-2 \im \theta_{z}} \cos ^2\left(\theta_{x}+\theta_{y}\right)-e^{-2 \im \theta_{z}} \sin ^2\left(\theta_{x}+\theta_{y}\right)\right)&=0\\ 
\im \sin \left(\theta_{x}-\theta_{y}\right) \cos \left(\theta_{x}+\theta_{y}\right)-\im e^{\im \theta_{z}} \sin \left(\theta_{x}-\theta_{y}\right) \left(e^{-2 \im \theta_{z}} \cos ^2\left(\theta_{x}+\theta_{y}\right)-e^{-2 \im \theta_{z}} \sin ^2\left(\theta_{x}+\theta_{y}\right)\right)&=0\\  
\im \sin \left(\theta_{x}+\theta_{y}\right) \cos \left(\theta_{x}-\theta_{y}\right)-2 \im e^{-\im \theta_{z}} \sin \left(\theta_{x}+\theta_{y}\right) \cos \left(\theta_{x}-\theta_{y}\right) \cos \left(\theta_{x}+\theta_{y}\right)&=0\\  
-\sin \left(\theta_{x}-\theta_{y}\right) \sin \left(\theta_{x}+\theta_{y}\right)+2 e^{-\im \theta_{z}} \sin \left(\theta_{x}-\theta_{y}\right) \sin \left(\theta_{x}+\theta_{y}\right) \cos \left(\theta_{x}+\theta_{y}\right)&=0\\
e^{-2 \im \theta_{z}} \cos ^2\left(\theta_{x}+\theta_{y}\right)-e^{-\im \theta_{z}} \cos \left(\theta_{x}+\theta_{y}\right) \left(e^{-2 \im \theta_{z}} \cos ^2\left(\theta_{x}+\theta_{y}\right)-e^{-2 \im \theta_{z}} \sin ^2\left(\theta_{x}+\theta_{y}\right)\right)&=0\\
\im e^{-2 \im \theta_{z}} \sin \left(\theta_{x}+\theta_{y}\right) \cos \left(\theta_{x}+\theta_{y}\right)-\im e^{-\im \theta_{z}} \sin \left(\theta_{x}+\theta_{y}\right) \left(e^{-2 \im \theta_{z}} \cos ^2\left(\theta_{x}+\theta_{y}\right)-e^{-2 \im \theta_{z}} \sin ^2\left(\theta_{x}+\theta_{y}\right)\right)&=0\\ 
\im e^{-2 \im \theta_{z}} \sin \left(\theta_{x}+\theta_{y}\right) \cos \left(\theta_{x}+\theta_{y}\right)-2 \im e^{-3 \im \theta_{z}} \sin \left(\theta_{x}+\theta_{y}\right) \cos ^2\left(\theta_{x}+\theta_{y}\right)&=0\\  
2 e^{-3 \im \theta_{z}} \sin ^2\left(\theta_{x}+\theta_{y}\right) \cos \left(\theta_{x}+\theta_{y}\right)-e^{-2 \im \theta_{z}} \sin ^2\left(c_1+\theta_{y}\right)&=0
\end{align*}
\caption{The equations to be satisfied by the 1D
Heisenberg model such as the Hamiltonian of the model corresponds to the fusion operator $T$ of Fig. \ref{fig:fusion}. Recall that satisfying these equations is an absolute requirement in order to achieve the compression of Fig. \ref{fig:fusion}. This set of equations resemble  Ref. \cite[Eqs. (14)-(29)]{Gulania2021} in the context of solutions of the YBE.} 
\label{fig:eqsH}
\end{widetext}
\end{figure*}

\begin{figure*}[tbh!]
\begin{widetext}
\begin{align*}
e^{\im c_3} \cos ^2\left(\tfrac{1}{2} \left(c_1-c_2\right)\right)-e^{\tfrac{\im c_3}{2}} \cos \left(\tfrac{1}{2} \left(c_1-c_2\right)\right) \left(e^{\im c_3} \cos ^2\left(\tfrac{1}{2} \left(c_1-c_2\right)\right)-e^{\im c_3} \sin ^2\left(\tfrac{1}{2} \left(c_1-c_2\right)\right)\right)&=0 \\
\im e^{\im c_3} \sin \left(\tfrac{1}{2} \left(c_1-c_2\right)\right) \cos \left(\tfrac{1}{2} \left(c_1-c_2\right)\right)-\im e^{\tfrac{\im c_3}{2}} \sin \left(\tfrac{1}{2} \left(c_1-c_2\right)\right) \left(e^{\im c_3} \cos ^2\left(\tfrac{1}{2} \left(c_1-c_2\right)\right)-e^{\im c_3} \sin ^2\left(\tfrac{1}{2} \left(c_1-c_2\right)\right)\right)&=0 \\
\im e^{\im c_3} \sin \left(\tfrac{1}{2} \left(c_1-c_2\right)\right) \cos \left(\tfrac{1}{2} \left(c_1-c_2\right)\right)-2 \im e^{\tfrac{3 \im c_3}{2}} \sin \left(\tfrac{1}{2} \left(c_1-c_2\right)\right) \cos ^2\left(\tfrac{1}{2} \left(c_1-c_2\right)\right)&=0 \\
2 e^{\tfrac{3 \im c_3}{2}} \sin ^2\left(\tfrac{1}{2} \left(c_1-c_2\right)\right) \cos \left(\tfrac{1}{2} \left(c_1-c_2\right)\right)-e^{\im c_3} \sin ^2\left(\tfrac{1}{2} \left(c_1-c_2\right)\right)&=0 \\
\cos \left(\tfrac{1}{2} \left(c_1-c_2\right)\right) \cos \left(\tfrac{1}{2} \left(c_1+c_2\right)\right)-e^{-\tfrac{1}{2} \left(\im c_3\right)} \cos \left(\tfrac{1}{2} \left(c_1+c_2\right)\right) \left(e^{\im c_3} \cos ^2\left(\tfrac{1}{2} \left(c_1-c_2\right)\right)-e^{\im c_3} \sin ^2\left(\tfrac{1}{2} \left(c_1-c_2\right)\right)\right)&=0 \\
\im \sin \left(\tfrac{1}{2} \left(c_1+c_2\right)\right) \cos \left(\tfrac{1}{2} \left(c_1-c_2\right)\right)-\im e^{-\tfrac{1}{2} \left(\im c_3\right)} \sin \left(\tfrac{1}{2} \left(c_1+c_2\right)\right) \left(e^{\im c_3} \cos ^2\left(\tfrac{1}{2} \left(c_1-c_2\right)\right)-e^{\im c_3} \sin ^2\left(\tfrac{1}{2} \left(c_1-c_2\right)\right)\right)&=0 \\
\im \sin \left(\tfrac{1}{2} \left(c_1-c_2\right)\right) \cos \left(\tfrac{1}{2} \left(c_1+c_2\right)\right)-2 \im e^{\tfrac{\im c_3}{2}} \sin \left(\tfrac{1}{2} \left(c_1-c_2\right)\right) \cos \left(\tfrac{1}{2} \left(c_1-c_2\right)\right) \cos \left(\tfrac{1}{2} \left(c_1+c_2\right)\right)&=0 \\
-\sin \left(\tfrac{1}{2} \left(c_1-c_2\right)\right) \sin \left(\tfrac{1}{2} \left(c_1+c_2\right)\right)+2 e^{\tfrac{\im c_3}{2}} \sin \left(\tfrac{1}{2} \left(c_1-c_2\right)\right) \sin \left(\tfrac{1}{2} \left(c_1+c_2\right)\right) \cos \left(\tfrac{1}{2} \left(c_1-c_2\right)\right)&=0 \\
\cos \left(\tfrac{1}{2} \left(c_1-c_2\right)\right) \cos \left(\tfrac{1}{2} \left(c_1+c_2\right)\right)-e^{\tfrac{\im c_3}{2}} \cos \left(\tfrac{1}{2} \left(c_1-c_2\right)\right) \left(e^{-\im c_3} \cos ^2\left(\tfrac{1}{2} \left(c_1+c_2\right)\right)-e^{-\im c_3} \sin ^2\left(\tfrac{1}{2} \left(c_1+c_2\right)\right)\right)&=0 \\
\im \sin \left(\tfrac{1}{2} \left(c_1-c_2\right)\right) \cos \left(\tfrac{1}{2} \left(c_1+c_2\right)\right)-\im e^{\tfrac{\im c_3}{2}} \sin \left(\tfrac{1}{2} \left(c_1-c_2\right)\right) \left(e^{-\im c_3} \cos ^2\left(\tfrac{1}{2} \left(c_1+c_2\right)\right)-e^{-\im c_3} \sin ^2\left(\tfrac{1}{2} \left(c_1+c_2\right)\right)\right)&=0 \\
\im \sin \left(\tfrac{1}{2} \left(c_1+c_2\right)\right) \cos \left(\tfrac{1}{2} \left(c_1-c_2\right)\right)-2 \im e^{-\tfrac{1}{2} \left(\im c_3\right)} \sin \left(\tfrac{1}{2} \left(c_1+c_2\right)\right) \cos \left(\tfrac{1}{2} \left(c_1-c_2\right)\right) \cos \left(\tfrac{1}{2} \left(c_1+c_2\right)\right)&=0 \\
-\sin \left(\tfrac{1}{2} \left(c_1-c_2\right)\right) \sin \left(\tfrac{1}{2} \left(c_1+c_2\right)\right)+2 e^{-\tfrac{1}{2} \left(\im c_3\right)} \sin \left(\tfrac{1}{2} \left(c_1-c_2\right)\right) \sin \left(\tfrac{1}{2} \left(c_1+c_2\right)\right) \cos \left(\tfrac{1}{2} \left(c_1+c_2\right)\right)&=0 \\
e^{-\im c_3} \cos ^2\left(\tfrac{1}{2} \left(c_1+c_2\right)\right)-e^{-\tfrac{1}{2} \left(\im c_3\right)} \cos \left(\tfrac{1}{2} \left(c_1+c_2\right)\right) \left(e^{-\im c_3} \cos ^2\left(\tfrac{1}{2} \left(c_1+c_2\right)\right)-e^{-\im c_3} \sin ^2\left(\tfrac{1}{2} \left(c_1+c_2\right)\right)\right)&=0 \\
\im e^{-\im c_3} \sin \left(\tfrac{1}{2} \left(c_1+c_2\right)\right) \cos \left(\tfrac{1}{2} \left(c_1+c_2\right)\right)-\im e^{-\tfrac{1}{2} \left(\im c_3\right)} \sin \left(\tfrac{1}{2} \left(c_1+c_2\right)\right) \left(e^{-\im c_3} \cos ^2\left(\tfrac{1}{2} \left(c_1+c_2\right)\right)-e^{-\im c_3} \sin ^2\left(\tfrac{1}{2} \left(c_1+c_2\right)\right)\right)&=0 \\
\im e^{-\im c_3} \sin \left(\tfrac{1}{2} \left(c_1+c_2\right)\right) \cos \left(\tfrac{1}{2} \left(c_1+c_2\right)\right)-2 \im e^{-\tfrac{1}{2} \left(3 \im c_3\right)} \sin \left(\tfrac{1}{2} \left(c_1+c_2\right)\right) \cos ^2\left(\tfrac{1}{2} \left(c_1+c_2\right)\right)&=0 \\
2 e^{-\tfrac{1}{2} \left(3 \im c_3\right)} \sin ^2\left(\tfrac{1}{2} \left(c_1+c_2\right)\right) \cos \left(\tfrac{1}{2} \left(c_1+c_2\right)\right)-e^{-\im c_3} \sin ^2\left(\tfrac{1}{2} \left(c_1+c_2\right)\right)&=0
\end{align*}
\caption{The equations to be satisfied by the $A$ gate of the fusion operator of Fig. \ref{fig:fusion}.} 
\label{fig:eqsA}
\end{widetext}
\end{figure*}






    

\section{Summary and Conclusion}

In this note, we investigate under which conditions the $A$ gate and the unitary $e^{\im \hat{H}}$, where $\hat{H}$ is the Hamiltonian of the 1D Heisenberg model, satisfy the pentagon equation \eqref{def:pen} in the context of circuit compression. By carefully analyzing the pentagon equation, we were able to show that a quantum circuit composed of 2-qubit interactions that include non-local interactions can be compressed to a local interactions only quantum circuit if the corresponding gates are fusion operators, that is, for a particular set of parameters that satisfy the constraining equations. 

It is natural to wonder if it is possible to implement non-local, at least next-to-nearest-neighbor, interactions in the current NISQ architectures. As implied in several occasions in the preceding content, there are indeed cases where this is possible, especially in cold-atomic
architectures based on the Rydberg blockade \cite{isenhower2011multibit} or trapped ions \cite{pagano2020quantum}, although it is not straightforward to see how this would scale to more than a few such interactions since it is highly non-trivial to find and control the optimal pulses that would implement such interactions while eliminating possible cross-talk effects. (This challenge is often referred to as ``crowding".) In that sense finding fusion operators that could perform the reduction as described previously would be highly desired. Furthemore, in future iterations of NISQ devices and early fault-tolerant (EFT) devices, there is potential for accurate implementation of next-to-nearest-neighbor interactions required for demanding quantum algorithms, and we hope that finding interesting fusion operators and other such local/non-local dualities could have a substantial impact by reducing the corresponding number of SWAP gates when direct non-nearest neighbor interactions are not possible.

Generally, finding solutions that satisfy the pentagon equation, fusion operators, and transpiling quantum circuits in the approach presented above, is highly non-trivial. While we only considered the pentagon equation in the context of the $A$ gate and the evolution operator of the 1D Heisenberg model, coming up with the ``the least trivial solution," one can argue that more solutions exist and can actually perform circuit reduction. Such solutions can potentially arise from other integrable systems.

Another approach to finding fusion operators would be to follow the mathematical recipe given in Remark \ref{rem:recipe} for  the group ring let $\mathbb{C}[U(4)]$ of the Lie group $U(4)$. To be more precise, $\mathbb{C}[U(4)]$ is the set of all linear combinations of finitely many elements of $U(4)$ with coefficients in $\mathbb{C}$ and has a bialgebra structure. The product and the coproduct of $\mathbb{C}[U(4)]$ define a fusion operator. We plan to expand this direction in a future note.

\subsubsection*{{\bf Aknowledgements}}
We would like to thank Philip Intallura and Juan Miguel Nieto Garc\'ia for usefull discussions, comments and suggestions on an early version of the draft. 
J.M. acknowledges the support of the OP
VVV project CZ.02.1.01/0.0/0.0/16\_019/0000765 ``Research Center for Informatics".

\subsubsection*{{\bf Disclaimer}}
This paper was prepared for information purposes
and is not a product of HSBC Bank Plc. or its affiliates.
Neither HSBC Bank Plc. nor any of its affiliates make
any explicit or implied representation or warranty and
none of them accept any liability in connection with
this paper, including, but limited to, the completeness,
accuracy, reliability of information contained herein and
the potential legal, compliance, tax, or accounting effects
thereof. This document is not intended as investment
research or investment advice, or a recommendation,
offer or solicitation for the purchase or sale of any security, financial instrument, financial product, or service,
or to be used in any way for evaluating the merits of
participating in any transaction.

\bibliography{apssamp}

\end{document}